\documentclass[conference]{IEEEtran}
\IEEEoverridecommandlockouts
\usepackage{tikz}
\usetikzlibrary{arrows.meta,positioning}
\usepackage{balance}
\usepackage{cite}
\usepackage{amsmath,amssymb,amsfonts}
\usepackage{algorithmic}
\usepackage{graphicx}
\usepackage{textcomp}
\usepackage{xcolor}
\def\BibTeX{{\rm B\kern-.05em{\sc i\kern-.025em b}\kern-.08em
    T\kern-.1667em\lower.7ex\hbox{E}\kern-.125emX}}
\usepackage{amsmath,amssymb,amsthm,mathtools}
\usepackage{bm}
\usepackage[draft]{hyperref}
\usepackage[nameinlink]{cleveref}
\usepackage{array}
\newtheorem*{example*}{Example}
\usepackage{amsthm,thmtools}
\usepackage{tabularx,makecell}
\usepackage{amsmath, amsthm, amscd, amsfonts, amssymb, graphicx, color, booktabs, multirow, rotating, tikz}
\usepackage{booktabs}   
\usepackage{lscape} 
\usepackage{multirow}   
\usepackage{tabularx}
\usepackage{float}
\usepackage{array} 
\newcolumntype{C}[1]{>{\centering\arraybackslash}p{#1}}
\usepackage{stfloats}

\newtheorem{theorem}{Theorem}
\newtheorem{lemma}{Lemma}
\newtheorem{definition}{Definition}
\newtheorem{remark}{Remark}
\newtheorem{example}{Example}
\newtheorem{proposition}{Proposition}

\usepackage{array,tabularx} 
\newcolumntype{Y}{>{\raggedright\arraybackslash}X}
\newcolumntype{L}{>{\raggedright\arraybackslash}X}

\renewcommand{\baselinestretch}{0.97}
\usepackage{geometry}
\begin{document}

\title{Lifted-Product QLDPC Codes\\ in the Polynomial Domain

}\vspace{-1mm}

\author{\IEEEauthorblockN{Vahid Nourozi, Mohsen Moradi, Roxana Smarandache and David G. M. Mitchell}
}

\author{
Vahid~Nourozi$^\dag$,
Mohsen~Moradi$^*$, Roxana~Smarandache$^\ddag$,  
and David~G.~M.~Mitchell$^\dag$,
\\[2ex]
$^\dag$Klipsch School of Electrical and Computer Engineering,
New Mexico State University\\
Las Cruces, NM 88003, USA,
Email: nourozi@nmsu.edu; dgmm@nmsu.edu.\\
$^*$School of Electrical, Computer and Energy Engineering,
Arizona State University\\
Tempe, AZ 85287, USA,
Email: mmorad11@asu.edu. \\
$^\ddag$Departments of Mathematics and Electrical Engineering, University of Notre Dame\\
Notre Dame, IN 46556, USA,
Email: rsmarand@nd.edu.\vspace{-1mm}
}

\maketitle

\begin{abstract}
This paper presents a finite-length polynomial-domain formulation of lifted-product quantum low-density parity-check (QLDPC) codes. We formulate the code construction over the quotient ring $\mathbb{F}_2[D]/(D^L+1)$, where polynomial base matrices are lifted entrywise to binary circulant blocks. This representation gives a compact algebraic description of the lifted-product parity-check matrices and allows CSS orthogonality to be analyzed before binary expansion. We show that the standard circulant lifting map is compatible with polynomial conjugation, which implies that the resulting binary matrices satisfy the CSS commutation constraint. 
The construction is illustrated with a constraint length $7$, rate $1/2$ NASA convolutional code example, and numerical examples are provided from a $3\times 4$ polynomial parity-check matrix. The finite-length performance of selected constructed codes is then evaluated over the depolarizing channel using various benchmark decoders. The resulting framework gives a structured method to construct finite-length lifted-product QLDPC codes from small polynomial base matrices.
\end{abstract}



\section{Introduction}

\IEEEPARstart{Q}{uantum} error correction is essential for reliable quantum information processing in the presence of noise and decoherence. Within this area, Calderbank--Shor--Steane (CSS) codes provide one of the main algebraic frameworks to construct stabilizer codes from classical linear codes \cite{calderbank1996good,steane1996error, gottesman1997stabilizer}. Among modern CSS-based families, quantum low-density parity-check (QLDPC) codes have attracted particular attention because they offer sparse parity-check operators together with the possibility of nonzero asymptotic rate and growing minimum distance \cite{mackay2004sparse, breuckmann2021balanced}. A foundational step in this direction was the hypergraph-product (HGP) construction of Tillich and Z\'emor, which showed how tensor-product structure can be used to obtain QLDPC codes with positive rate and minimum distance proportional to the square root of the blocklength \cite{tillich2013quantum}.

More recently, Panteleev and Kalachev introduced the lifted-product (LP) framework, which generalizes product constructions for quantum codes  and led to quantum LDPC codes with almost linear growth of minimum distance \cite{panteleev2022quantum}. In subsequent work, they showed that lifted products over non-abelian groups yield asymptotically good quantum LDPC codes, thereby establishing a major breakthrough in the theory of QLDPC constructions \cite{panteleev2022asymptotically}. These results make the lifted product one of the central modern tools for building structured quantum codes. At the same time, practical implementations often require finite-length codes with explicit algebraic descriptions, controlled block size, and constructions that remain compatible with structured hardware-oriented representations.

One natural way to realize such structure is through polynomial and quasi-cyclic descriptions. Classical LDPC block and convolutional codes based on circulant matrices provide a convenient algebraic language for lifting small base matrices into large sparse binary parity-check matrices \cite{tanner2004ldpc, smarandache2012quasi, kovalev2013quantum}. In the LP setting, this point of view is especially useful because it allows the base matrices to be written over the quotient ring $\mathbb{F}_2[D]/(D^L+1)$ and then lifted entrywise to binary circulant blocks. Recent work on finite-length LP QLDPC codes has shown that, in the quasi-cyclic setting,  combinatorial constraints on the base matrices can strongly affect the resulting quantum minimum distance and degeneracy properties \cite{raveendran2025minimum, panteleev2021degenerate}. 
This highlights the need for an  algebraic formulation of the LP construction that makes the orthogonality mechanism transparent in the polynomial domain.

In this paper, we explore the finite-length polynomial-domain formulation of LP-QLDPC codes. In particular, we formulate the LP parity-check matrices directly over the quotient ring $R_L=\mathbb{F}_2[D]/(D^L+1)$, so that small polynomial base matrices provide compact descriptions of the corresponding circulant lifted binary parity-check matrices.  We then prove that standard circulant lifting is conjugate-compatible and use this property to give a compact polynomial-domain proof of the CSS commutation condition $H_XH_Z^T=0.$ This framework also connects convolutional code polynomial parity-check matrices with finite-length LP-QLDPC construction through tail-biting circulant lifting. This provides a polynomial domain construction and verification framework that can be used together with existing finite-length search and distance-optimization methods. 
The approach is illustrated with a constraint length $7$, rate $R=1/2$ NASA convolutional code example, and additional numerical examples are provided from a $3\times 4$ polynomial parity-check matrix. The finite-length performance of selected constructed codes is then evaluated over the depolarizing channel using various benchmark decoders. 


\section{Background}\label{sec:preliminaries}

\subsection{Polynomial Rings and Quotient Structures}
\begin{definition}[Polynomial Ring]
    The polynomial ring $\mathbb{F}_2[D]$ consists of all polynomials in the indeterminate $D$ with coefficients in the field $\mathbb{F}_2$, with addition and multiplication defined by standard polynomial arithmetic modulo 2.
\end{definition}

\begin{definition}[Quotient Ring]
    For a positive integer $L$, the quotient ring $R_L \triangleq  \mathbb{F}_2[D]/(D^L+1)$ consists of equivalence classes of polynomials in $\mathbb{F}_2[D]$, where $p(D) \sim q(D)$ if and only if $p(D) - q(D)$ is divisible by $D^L+1$.

By the polynomial division algorithm, every equivalence class has a
unique remainder of degree less than $L$. We use this remainder as
the canonical representative of the class. Hence, every element of
$R_L$ can be written uniquely as
$$p(D)=\sum_{i=0}^{L-1} a_iD^i\in \mathbb{F}_2[D]/(D^L+1),
\qquad a_i\in\mathbb{F}_2.$$
Since the field has characteristic two, the relation
$D^L+1=0$ in $R_L$ implies $D^L=1$. Hence, exponents are reduced
modulo $L$, and $D^{-1}=D^{L-1}$ in $R_L$.

\end{definition}





\begin{definition}[Polynomial conjugation]
\label{def3}
For a polynomial  $p(D)=\sum_{i=0}^{L-1} a_i D^i\in R_L$, define its conjugate (or involution) polynomial $p(D)^{*}\in R_L$ as
\[
p(D)^{*}\triangleq p(D^{-1}) =\sum_{i=0}^{L-1} a_i D^{-i}
= \sum_{i=0}^{L-1} a_i D^{(L-i)\bmod L}.
\]
\end{definition}

\begin{definition}[CSS Code]
An [[N,K,d]] CSS quantum code is specified by
two binary parity-check matrices \(H_X,H_Z\in \mathbb{F}_2^{r_X\times N}\)
and \(\mathbb{F}_2^{r_Z\times N}\) satisfying the CSS commutation condition
\[
H_XH_Z^T=0 .
\]
Equivalently, $\operatorname{Im}(H_X)\subseteq \ker(H_Z) \text{ and } 
\!\operatorname{Im}(H_Z)\subseteq \ker(H_X).$
The number of logical qubits is $K=N-\operatorname{rank}(H_X)-\operatorname{rank}(H_Z),$
and the quantum minimum distance is $d=\min\{d_X,d_Z\},$
where
\[
d_X=\min\{\operatorname{wt}(x)|x\in \ker(H_Z)\setminus
\operatorname{Im}(H_X)\},
\]
and
\[
d_Z=\min\{\operatorname{wt}(z)|z\in \ker(H_X)\setminus
\operatorname{Im}(H_Z)\}.
\]

\end{definition}

\subsection{Circulant Matrices and Lifting}

In this section, we formalize the basic objects that we will use repeatedly: circulant permutation matrices, the lifting operation from polynomials to circulant matrices, and the resulting polynomial–circulant correspondence \cite{tanner2004ldpc}.

\begin{definition}[Circulant permutation matrix]\label{def:circulant}
Let $L\in\mathbb{N}$ and $0\le s < L$.  
The $L\times L$ \emph{circulant permutation matrix} $\mathrm{Circ}_L(s)$ is defined as the cyclic shift of the identity matrix row-by-row by
\(s\) positions to the right, i.e., for all $i,j\in\{0,1,\dots,L-1\}$,
\[
  \bigl(\mathrm{Circ}_L(s)\bigr)_{i,j}
  = \begin{cases}
      1, & \text{if } j \equiv i + s \pmod L,\\
      0, & \text{otherwise}.
    \end{cases}
\]
\end{definition}


\medskip

We now formalize the lifting operation that associates a binary circulant matrix to each polynomial in \(\mathbb{F}_2[D]/(D^L+1)\). This operation gives a finite circulant, or tail-biting, block representation of polynomial convolutional-code components.

\begin{definition}[Lifting operation for polynomials]\label{def:scalar-lift}
Fix $L\in\mathbb{N}$ and consider the ring $R_L = \mathbb{F}_2[D]/(D^L+1)$.  
For a polynomial $p(D)\in R_L$
the \emph{lifting operation} $\mathcal L_L$ maps $p(D)$ to an $L\times L$ binary circulant matrix
\[
  \mathcal L_L(p) \;\triangleq\; \sum_{i=0}^{L-1} a_i \, \mathrm{Circ}_L(i),
\]
where the sum is taken over $\mathbb{F}_2$ (entrywise modulo~2).
\end{definition}


Intuitively, the coefficient $a_i=1$ tells us to add the circulant permutation matrix corresponding to a shift by $i$ positions; if $a_i=0$, that shift is absent.  The first row of $\mathcal L_L(p)$ is exactly the coefficient vector $(a_0,a_1,\dots,a_{L-1})$, and all other rows are its cyclic shifts. Now we extend this lifting operation entrywise to polynomial matrices.

\begin{definition}[Polynomial–circulant correspondence]\label{def:poly-circ}
Let $P(D) = \bigl(p_{ij}(D)\bigr)\in R_L^{m\times n}$ be an $m\times n$ matrix with entries in $R_L$.  
The \emph{lifting} (or \emph{polynomial–circulant correspondence}) associates to $P(D)$ the binary matrix
\[
  \mathcal L_L\!\bigl(P(D)\bigr)
    \;\triangleq\;
    \bigl( \mathcal L_L(p_{ij}(D)) \bigr)_{i,j}
    \;\in\; \mathbb{F}_2^{\,mL \times nL},
\]
i.e., each polynomial entry $p_{ij}(D)$ is replaced by the $L\times L$ circulant matrix $\mathcal L_L\bigl(p_{ij}(D)\bigr)$ from Definition~\ref{def:scalar-lift}.  The resulting $mL\times nL$ matrix is block-circulant with $L\times L$ blocks.
\end{definition}



\subsection{Depolarizing Channel Model and BP-based Decoders}
\label{subsec:benchmark_decoders}

To evaluate the finite-length error-correction performance of our constructed codes, we will later use several standard BP-based decoders over the depolarizing channel. The quaternary belief propagation (QBP) decoder ~\cite{poulin2008iterative,kuo2020refined} performs message passing on the CSS Tanner graphs while keeping a local belief over the four Pauli symbols \(\{I,X,Y,Z\}\). Under the depolarizing channel, the error $q_i$ on the qubit \(i\) satisfies
\[
\Pr(q_i=I)=1-p,
\]
\[
\Pr(q_i=X)=\Pr(q_i=Y)=\Pr(q_i=Z)=\frac{p}{3}.
\]
Unlike independent \(X\)- and \(Z\)-component decoding, QBP accounts for the coupling introduced by the \(Y\) error. The decoder iteratively updates the messages and returns a hard Pauli estimate, which is then checked for syndrome
consistency and logical correctness.

In QBP with guided decimation (QBPGD) \cite{yao2024bpgd}, short runs ($T$ iterations) of the QBP decoder are interleaved with $G$ reliability-based decimation steps. At each decimation step, one highly reliable qubit is fixed, and QBP is then restarted on the reduced problem. This procedure helps break symmetric message passing failures caused by short cycles and degeneracy, but it can increase the decoding cost because several QBP runs may be required. We also evaluate our codes using a reinforcement-learning-based sequential decoder (RL-S) \cite{moradi2026rlbp}. It uses the same underlying BP message-update rules, but replaces the flooding schedule with a learned sequential variable-node update order. The learned policy is trained offline and then used during inference to select the next variable node based on a local syndrome-driven state. 

\section{The Conjugate Transpose Operation\\ in Quotient Rings}
We now define and establish basic algebraic properties of the conjugate transpose in $R_L=\mathbb{F}_2[D]/(D^L+1)$. This operation is crucial for our approach, since it underpins the orthogonality proofs and design of LP codes.
\subsection{Algebraic Properties}
By Definition \ref{def3}, polynomial conjugation is the map $p(D) \mapsto p^*(D) = p(D^{-1})$ interpreted in the quotient ring (with $D^L \equiv 1$). 


\subsection{Matrix Conjugate Operations}
\begin{lemma}[Basic properties of $^{*}$]
\label{lem:star-props}
The map $^{*}:R_L\to R_L$ is $\mathbb{F}_2$-linear, involutive, and multiplicative:
\[
(p+q)^{*}=p^{*}+q^{*},\qquad (p^{*})^{*}=p,\qquad (pq)^{*}=p^{*}q^{*}.
\]
\end{lemma}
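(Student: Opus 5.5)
The plan is to work with canonical representatives and reduce everything to the action of $^{*}$ on exponents. Write $p=\sum_{i=0}^{L-1}a_iD^i$ and $q=\sum_{i=0}^{L-1}b_iD^i$. The key preliminary observation is that $^{*}$ is well defined on $R_L$. The map $\sigma: i\mapsto (L-i)\bmod L = (-i)\bmod L$ is a permutation of $\mathbb{Z}_L$ (it is its own inverse). By Definition~\ref{def3}, $^{*}$ therefore just permutes the coefficient vector: the coefficient of $D^{j}$ in $p^{*}$ is $a_{\sigma(j)}$.

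\textbf{Linearity.} Since $p+q=\sum_i(a_i+b_i)D^i$ is already in canonical form, we get
\[
(p+q)^{*}=\sum_i (a_i+b_i)D^{\sigma(i)}=p^{*}+q^{*}.
\]
Scalar multiplication over $\mathbb{F}_2$ is trivial.

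\textbf{Involution.} We have $(p^{*})^{*}=\sum_i a_iD^{\sigma(\sigma(i))}$, and $\sigma(\sigma(i))\equiv -(-i)\equiv i \pmod L$, so $(p^{*})^{*}=p$.

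\textbf{Multiplicativity.} This is the only step needing care, because the product $pq$ must first be reduced modulo $D^L+1$. I would handle this by linearity. Since multiplication in $R_L$ is bilinear and $^{*}$ is additive, it suffices to verify $(D^iD^j)^{*}=(D^i)^{*}(D^j)^{*}$ for monomials. In $R_L$ we have $D^iD^j=D^{(i+j)\bmod L}$, so its conjugate is $D^{-(i+j)\bmod L}$. On the other side, $(D^i)^{*}(D^j)^{*}=D^{-i}D^{-j}=D^{-(i+j)}$, with exponents read mod $L$; this agrees because $D^L=1$ makes $D^k$ depend only on $k \bmod L$. Expanding $pq=\sum_{i,j}a_ib_jD^{i+j}$ and applying additivity then gives $(pq)^{*}=p^{*}q^{*}$.

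Alternatively, one can phrase the whole argument cleanly: $R_L\cong\mathbb{F}_2[\mathbb{Z}_L]$ is the group algebra of the cyclic group, and $^{*}$ is the $\mathbb{F}_2$-linear extension of the group automorphism $g\mapsto g^{-1}$ (an automorphism because $\mathbb{Z}_L$ is abelian). This gives all three properties at once.

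The main obstacle is only the well-definedness subtlety in the multiplicativity step: one must check that applying $^{*}$ after reduction mod $D^L+1$ gives the same result as before. Equivalently, substituting $D\mapsto D^{-1}=D^{L-1}$ is a ring endomorphism of $\mathbb{F}_2[D]$ that sends the ideal $(D^L+1)$ into itself, since $D^{L(L-1)}+1$ is divisible by $D^L+1$. The exponent-mod-$L$ argument above settles this directly.
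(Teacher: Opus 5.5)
Your proof is correct and takes the same approach as the paper's. Linearity and involution follow directly from the definition (via your exponent permutation $\sigma$), and multiplicativity comes from $^{*}$ acting as $D\mapsto D^{-1}$ on monomials with exponents read mod $L$. You also spell out two points the paper's one-line proof leaves implicit: the extension from monomials by bilinearity, and the well-definedness check that $D\mapsto D^{L-1}$ preserves the ideal $(D^L+1)$.
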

\begin{proof}
Linearity and involution follow immediately from Definition \ref{def3}. For multiplicativity, note that
$^{*}$ acts as $D\mapsto D^{-1}$, and $(D^{-1})^k=D^{-k}$ in $R_L$.
\end{proof}

\begin{definition}[Matrix Conjugate Transpose]
For a polynomial matrix $B = [b_{ij}(D)]$ (with entries in $R_L$), the conjugate transpose $B^*$ is defined by
\[
\bigl(B^{*}\bigr)_{ij}(D) = \bigl(b_{ji}(D)\bigr)^{*}.
\]
That is, $ B^*$ is the transpose of $B$ with each entry replaced by its conjugate polynomial.
\end{definition}
We use the convention of the $^*$ operator notation for this operation, which is not to be confused with the Hermitian transpose. The conjugation is performed within the ring $R_L$ and with conjugation as defined in Def.~\ref{def3}.

\begin{proposition}[Conjugate Transpose Properties]
For any conformable polynomial matrices $A, B$:
\begin{itemize}
    \item $(A + B)^* = A^* + B^*$.
    \item $(A \cdot B)^* = B^* \cdot A^*$.
    \item $(A^*)^* = A$.
    \item $(A \otimes B)^* = A^* \otimes B^*$, where $\otimes$ denotes the Kronecker product.
\end{itemize}

\end{proposition}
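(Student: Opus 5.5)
The plan is to prove all four identities entrywise, reducing each to the scalar properties in Lemma~\ref{lem:star-props} together with the definition $(B^{*})_{ij}=(b_{ji})^{*}$. The one structural fact used throughout is that $R_L$ is a \emph{commutative} ring. Because of this, the order of factors inside each scalar product can be swapped freely. The order reversal in the matrix product rule then comes only from the transpose.

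\textbf{Sum and involution.} For $A,B\in R_L^{m\times n}$, the $(i,j)$ entry of $(A+B)^{*}$ is $(a_{ji}+b_{ji})^{*}=a_{ji}^{*}+b_{ji}^{*}$ by linearity of $^{*}$. This is the $(i,j)$ entry of $A^{*}+B^{*}$. Similarly, $((A^{*})^{*})_{ij}=((A^{*})_{ji})^{*}=(a_{ij}^{*})^{*}=a_{ij}$ by involutivity.

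\textbf{Product.} Let $A\in R_L^{m\times n}$ and $B\in R_L^{n\times p}$. Then
\[
((AB)^{*})_{ij}=\Bigl(\sum_{k} a_{jk}b_{ki}\Bigr)^{*}=\sum_k a_{jk}^{*}b_{ki}^{*}=\sum_k (B^{*})_{ik}(A^{*})_{kj}=(B^{*}A^{*})_{ij}.
\]
The first equality is the definition of the product. The second uses linearity and multiplicativity of $^{*}$. The third uses commutativity of $R_L$ to write $a_{jk}^{*}b_{ki}^{*}=b_{ki}^{*}a_{jk}^{*}$, and then identifies $b_{ki}^{*}=(B^{*})_{ik}$ and $a_{jk}^{*}=(A^{*})_{kj}$.

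\textbf{Kronecker product.} This is the step needing the most care, because the indexing must be handled correctly. Let $A$ be $m\times n$ and $B$ be $p\times q$. I index $A\otimes B$ by row pairs $(i,k)$ and column pairs $(j,l)$, so that $(A\otimes B)_{(i,k),(j,l)}=a_{ij}b_{kl}$. Then
\[
((A\otimes B)^{*})_{(j,l),(i,k)}=(a_{ij}b_{kl})^{*}=a_{ij}^{*}b_{kl}^{*}=(A^{*})_{ji}(B^{*})_{lk}=(A^{*}\otimes B^{*})_{(j,l),(i,k)}.
\]
To finish, I check that the block layout agrees. In block form, $(A\otimes B)^{*}$ has $(j,i)$ block equal to $(a_{ij}B)^{*}=a_{ij}^{*}B^{*}$, using the scalar case of the product rule together with commutativity. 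This is exactly the $(j,i)$ block of $A^{*}\otimes B^{*}$. In the block form $M=(M_{ij})$, a conjugate transpose transposes the block arrangement and also applies $^{*}$ inside each block.

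Note that no reversal of order occurs here, unlike in the product rule. The reason is that the ordinary transpose already satisfies $(A\otimes B)^T=A^T\otimes B^T$.
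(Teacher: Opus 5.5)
Your proof is correct and follows the natural route. The paper states this proposition without proof, leaving it as the entrywise consequence of Lemma~\ref{lem:star-props}; your argument supplies that reduction, correctly identifies commutativity of $R_L$ as what confines the order reversal in $(AB)^{*}=B^{*}A^{*}$ to the transpose, and handles the pair-index conventions for $(A\otimes B)^{*}=A^{*}\otimes B^{*}$ correctly, with the block-form check confirming that the layouts agree.
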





\section{Lifted-Product Construction and Orthogonality}
\label{sec:lp}

\subsection{Lifting map and the conjugate-compatibility condition}
We use the standard circulant lifting map from $R_L$ to binary $L\times L$ circulant matrices following Definition \ref{def:scalar-lift}.

\begin{lemma}[Ring-homomorphism property]
\label{lem:ring-hom}
For all $p,q\in R_L$,
\[
 \mathcal L_L(p+q)= \mathcal L_L(p)+\mathcal L_L(q),\qquad \mathcal L_L(pq)=\mathcal L_L(p)\,\mathcal L_L(q),
\]
and the same holds entrywise for lifted matrices (block-circulant matrix multiplication).
\end{lemma}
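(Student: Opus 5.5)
The plan is to reduce everything to the behaviour of the basic shift matrices $\mathrm{Circ}_L(s)$. Additivity is immediate from Definition~\ref{def:scalar-lift}: if $p=\sum_i a_iD^i$ and $q=\sum_i b_iD^i$ are written in canonical form, then $p+q=\sum_i (a_i+b_i)D^i$ is again canonical. Hence $\mathcal L_L(p+q)=\sum_i (a_i+b_i)\mathrm{Circ}_L(i)=\mathcal L_L(p)+\mathcal L_L(q)$ over $\mathbb{F}_2$.

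For multiplicativity, the key step is the identity
\[
\mathrm{Circ}_L(s)\,\mathrm{Circ}_L(t)=\mathrm{Circ}_L\bigl((s+t)\bmod L\bigr).
\]
I will check it entrywise from Definition~\ref{def:circulant}. The $(i,j)$ entry of the product is $\sum_k [k\equiv i+s][j\equiv k+t]$. Exactly one $k$ survives, namely $k\equiv i+s \pmod L$, and it contributes $1$ precisely when $j\equiv i+s+t \pmod L$. So $D^s\mapsto \mathrm{Circ}_L(s)$ respects products of monomials, including the reduction $D^L=1$, because $\mathrm{Circ}_L(L)=\mathrm{Circ}_L(0)=I$.

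Then I expand $pq=\sum_{s,t}a_sb_tD^{s+t}$. In $R_L$ its canonical coefficients are $c_m=\sum_{s+t\equiv m}a_sb_t$, the cyclic convolution. Using bilinearity of matrix multiplication over $\mathbb{F}_2$ and the shift identity,
\[
\mathcal L_L(p)\mathcal L_L(q)=\sum_{s,t}a_sb_t\,\mathrm{Circ}_L\bigl((s+t)\bmod L\bigr)=\sum_m c_m\mathrm{Circ}_L(m)=\mathcal L_L(pq).
\]
The only point needing care, and the mild obstacle, is that $\mathcal L_L$ is defined on canonical representatives. I must confirm that collecting terms with equal $(s+t)\bmod L$ matches the reduction of $pq$ modulo $D^L+1$. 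It does, since in $R_L$ we have $D^{s+t}=D^{(s+t)\bmod L}$.

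For the matrix statement, let $P\in R_L^{m\times n}$ and $Q\in R_L^{n\times k}$. The $(i,j)$ block of $\mathcal L_L(P)\mathcal L_L(Q)$ is, by block multiplication, $\sum_\ell \mathcal L_L(p_{i\ell})\mathcal L_L(q_{\ell j})$. Applying the scalar multiplicativity to each term and then additivity to the sum, this equals $\mathcal L_L\bigl(\sum_\ell p_{i\ell}q_{\ell j}\bigr)=\mathcal L_L\bigl((PQ)_{ij}\bigr)$. Therefore $\mathcal L_L(PQ)=\mathcal L_L(P)\mathcal L_L(Q)$. Entrywise additivity of $\mathcal L_L$ on matrices follows the same way.
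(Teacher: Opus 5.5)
Your proof is correct and follows the same route as the paper, which simply invokes the standard correspondence between circulant matrices and arithmetic in $R_L$. You supply the details the paper leaves implicit: the shift identity $\mathrm{Circ}_L(s)\,\mathrm{Circ}_L(t)=\mathrm{Circ}_L((s+t)\bmod L)$, the cyclic-convolution check that reduction modulo $D^L+1$ matches the collection of terms, and the block-multiplication extension to matrices.
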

\begin{proof}
This follows from a standard property of circulant polynomial representations: multiplication and addition of circulant matrices corresponds to multiplication and addition in $R_L$.
\end{proof}

\begin{definition}[Conjugate-compatible lifting (explicit condition)]
\label{def:conj-compatible}
We say that the lifting map $\mathcal L_L$ is \emph{conjugate-compatible} with $^{*}$ if, for every $p\in R_L$,
\[
\mathcal L_L(p^{*}) = \mathcal L_L(p)^{T}.
\]
\end{definition}

\begin{lemma}[Conjugate-compatibility holds for standard circulant lifting]
\label{lem:compat-holds}
The lifting map in Definition~\ref{def:scalar-lift} is conjugate-compatible: $\mathcal L_L(p^{*})=\mathcal L_L(p)^T$ for all $p\in R_L$.
Consequently, for any matrix $B$ over $R_L$,
\[
\mathcal L_L(B^{*}) = \mathcal L_L(B)^{T}.
\]
\end{lemma}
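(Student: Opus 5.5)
The plan is to reduce the scalar statement to monomials and then pass to matrices blockwise.

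\emph{Scalar case.} By Definition~\ref{def:scalar-lift} and Definition~\ref{def3}, both sides are $\mathbb{F}_2$-linear in $p$. For the left side we use Lemma~\ref{lem:star-props} (linearity of $^{*}$) together with Lemma~\ref{lem:ring-hom} (additivity of $\mathcal L_L$). For the right side, transposition is linear. It therefore suffices to check the identity on the basis monomials $p=D^s$, $0\le s<L$. Here $\mathcal L_L(D^s)=\mathrm{Circ}_L(s)$ and $(D^s)^{*}=D^{(L-s)\bmod L}$, so the claim becomes
\[
\mathrm{Circ}_L\bigl((L-s)\bmod L\bigr)=\mathrm{Circ}_L(s)^T .
\]
This follows directly from Definition~\ref{def:circulant}. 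The $(i,j)$ entry of $\mathrm{Circ}_L(s)^T$ equals $(\mathrm{Circ}_L(s))_{j,i}$, which is $1$ iff $i\equiv j+s \pmod L$. That condition is equivalent to $j\equiv i-s\equiv i+(L-s)\pmod L$, which is exactly the defining condition for the $(i,j)$ entry of $\mathrm{Circ}_L((L-s)\bmod L)$ to equal $1$. The case $s=0$ is consistent, since $(L-0)\bmod L=0$ and the identity is symmetric. Summing over $i$ with coefficients $a_i$ then gives $\mathcal L_L(p^{*})=\sum_i a_i\,\mathrm{Circ}_L(i)^T=\mathcal L_L(p)^T$.

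\emph{Matrix case.} Let $B=(b_{ij})\in R_L^{m\times n}$. By Definition~\ref{def:poly-circ}, $\mathcal L_L(B^{*})$ is the $nL\times mL$ block matrix whose $(i,j)$ block is $\mathcal L_L\bigl((B^{*})_{ij}\bigr)=\mathcal L_L(b_{ji}^{*})$. By the scalar case this block equals $\mathcal L_L(b_{ji})^T$. On the other hand, the transpose of a block matrix is obtained by transposing the block arrangement and transposing each block. Hence the $(i,j)$ block of $\mathcal L_L(B)^T$ is $\mathcal L_L(b_{ji})^T$, and the two matrices agree.

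No step is a genuine obstacle. The only point needing care is the index bookkeeping in the monomial case: correctly reducing $-s$ modulo $L$ and confirming that the ``shift right by $s$'' convention of Definition~\ref{def:circulant} transposes to a shift by $L-s$ rather than by $s$. Once that is fixed, linearity and the blockwise structure of the transpose finish the argument.
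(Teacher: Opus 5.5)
Your proposal is correct and follows the same route as the paper's proof. You check the identity on monomials $D^s$ via $\mathrm{Circ}_L(s)^T=\mathrm{Circ}_L((L-s)\bmod L)$, extend by $\mathbb{F}_2$-linearity, and pass entrywise to matrices, while also writing out the index verification and the blockwise-transpose bookkeeping that the paper leaves implicit.
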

\begin{proof}
It suffices to verify the claim on monomials. For $p(D)=D^i$,
\[
p(D)^{*}=D^{-i}=D^{(L-i)\bmod L}.
\]
Also, $\mathrm{Circ}_L(i)^T=\mathrm{Circ}_L(-i)=\mathrm{Circ}_L((L-i)\bmod L)$.
Thus $\mathcal L_L((D^i)^{*}) = \mathcal L_L(D^i)^T$. This is then extended by $\mathbb{F}_2$-linearity to all $p\in R_L$ and then entrywise to matrices.
\end{proof}

\subsection{Lifted-product polynomial check matrices}
Let $B_1\in R_L^{m_1\times n_1}$ and $B_2\in R_L^{m_2\times n_2}$ be arbitrary polynomial parity-check base matrices.
Define the asymmetric LP polynomial check matrices
{\small
\begin{align}
B_X &\triangleq \Big[\, B_1\otimes I_{n_2}\ \Big|\ I_{m_1}\otimes B_2^{*}\,\Big]\in R_L^{\,m_1 n_2 \times \left(n_1 n_2 + m_1 m_2\right)}, \label{eq:BX-def}\\
B_Z &\triangleq \Big[\, I_{n_1}\otimes B_2\ \Big|\ B_1^{*}\otimes I_{m_2}\,\Big]\in R_L^{\,n_1 m_2 \times \left(n_1 n_2 + m_1 m_2\right)}, \label{eq:BZ-def}
\end{align}}
where $\otimes$ denotes the Kronecker product over the coefficient ring $R_L$ and $I_s$ is the $s\times s$ identity matrix. Then define the lifted binary parity-check matrices by
\[
\begin{aligned}
H_X &\triangleq \mathcal{L}_L(B_X)
\in \mathbb{F}_2^{\,m_1 n_2 L \times \left(n_1 n_2 + m_1 m_2\right)L}, \\
H_Z &\triangleq \mathcal{L}_L(B_Z)
\in \mathbb{F}_2^{\,n_1 m_2 L \times \left(n_1 n_2 + m_1 m_2\right)L}.
\end{aligned}
\]

\subsection{Conditional LP orthogonality theorem}
\begin{theorem}[Polynomial-domain LP orthogonality]
\label{thm:lp-orth}
Assume the lifting map $\mathcal L_L$ is conjugate-compatible (Definition~\ref{def:conj-compatible}).
Then for any base matrices $B_1,B_2$ over $R_L$, the lifted-product matrices satisfy
\[
H_X H_Z^{T} = 0 \quad \text{over } \mathbb{F}_2.
\]
\end{theorem}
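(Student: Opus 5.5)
The plan is to move the whole computation into the polynomial domain. There we show that $B_X B_Z^{*}=0$ over $R_L$, and then transport this identity to binary matrices using the ring-homomorphism property (Lemma~\ref{lem:ring-hom}) together with conjugate-compatibility (Definition~\ref{def:conj-compatible}, which Lemma~\ref{lem:compat-holds} guarantees in our setting).

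\textbf{Step 1: the transfer to binary matrices.} The lifting $\mathcal L_L$ acts entrywise on block matrices. By Lemma~\ref{lem:ring-hom} it respects sums and products of entries, so it respects matrix products: $\mathcal L_L(AB)=\mathcal L_L(A)\mathcal L_L(B)$ for conformable polynomial matrices. The $(i,j)$ block of the left side is $\mathcal L_L(\sum_k a_{ik}b_{kj})=\sum_k \mathcal L_L(a_{ik})\mathcal L_L(b_{kj})$, which is exactly block multiplication on the right. Combining this with $\mathcal L_L(B^{*})=\mathcal L_L(B)^T$ gives
\[
H_XH_Z^T=\mathcal L_L(B_X)\,\mathcal L_L(B_Z)^T=\mathcal L_L(B_X)\,\mathcal L_L(B_Z^{*})=\mathcal L_L(B_XB_Z^{*}).
\]
Since $\mathcal L_L(0)=0$, it suffices to prove $B_XB_Z^{*}=0$.

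\textbf{Step 2: computing $B_Z^{*}$.} We use the Conjugate Transpose Properties proposition. For block rows, $[A\,|\,C]^{*}$ is the block column with $A^{*}$ stacked above $C^{*}$. The Kronecker rule gives $(I_{n_1}\otimes B_2)^{*}=I_{n_1}\otimes B_2^{*}$, since $I^{*}=I$. Involution gives $(B_1^{*}\otimes I_{m_2})^{*}=B_1\otimes I_{m_2}$. Therefore
\[
B_XB_Z^{*}=(B_1\otimes I_{n_2})(I_{n_1}\otimes B_2^{*})+(I_{m_1}\otimes B_2^{*})(B_1\otimes I_{m_2}).
\]

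\textbf{Step 3: cancellation.} Apply the mixed-product rule $(A\otimes B)(C\otimes D)=AC\otimes BD$. This rule holds over the commutative ring $R_L$ by the same entrywise computation as over a field. Both terms then equal $B_1\otimes B_2^{*}$, with dimensions $m_1 n_2\times n_1 m_2$. Since $R_L$ has characteristic two, their sum is $0$.

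\textbf{Expected obstacle.} The main point of care is justifying that the mixed-product property and the Kronecker conjugate rule hold over $R_L$. Both depend on commutativity of $R_L$: the scalar entries of the Kronecker factors must commute past each other. I would state this briefly rather than prove it in full. I would also check the dimension bookkeeping, namely that the two summands have the same size and block indexing, and that the block transposition of $B_Z$ matches the column partition $n_1n_2+m_1m_2$ of $B_X$.
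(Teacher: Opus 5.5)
Your proposal is correct and follows essentially the same route as the paper: you compute $B_Z^{*}$ with the conjugate-transpose and Kronecker rules, use the mixed-product rule over the commutative ring $R_L$ to show both summands of $B_XB_Z^{*}$ equal $B_1\otimes B_2^{*}$ (so they cancel in characteristic two), and transfer to $H_XH_Z^T=0$ via Lemma~\ref{lem:ring-hom} and $\mathcal L_L(B_Z^{*})=\mathcal L_L(B_Z)^T$. The only differences are that you reduce to the polynomial identity before proving it, and that you spell out the blockwise argument for $\mathcal L_L(AB)=\mathcal L_L(A)\,\mathcal L_L(B)$, which the paper simply cites.
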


\begin{proof}
We first prove the polynomial-domain identity
\begin{equation}
\label{eq:poly-orth}
B_X\,B_Z^{*} = 0
\quad \text{in } R_L.
\end{equation}
\noindent From \eqref{eq:BZ-def} we have
\[
B_Z^{*}
=
\begin{bmatrix}
(I_{n_1}\otimes B_2)^{*} \\
(B_1^{*}\otimes I_{m_2})^{*}
\end{bmatrix}
=
\begin{bmatrix}
I_{n_1}\otimes B_2^{*} \\
B_1\otimes I_{m_2}
\end{bmatrix},
\]
using $I_k^{*}=I_k$ and $(B_1^{*})^{*}=B_1$, and the standard identity $(A\otimes C)^{*}=A^{*}\otimes C^{*}$. Now multiply $B_X$ (a horizontal concatenation) by $B_Z^{*}$ (a vertical concatenation):
\[
B_XB_Z^{*}
=
(B_1\otimes I_{n_2})(I_{n_1}\otimes B_2^{*})
+
(I_{m_1}\otimes B_2^{*})(B_1\otimes I_{m_2}).
\]
Because the coefficient ring $R_L$ is commutative, the mixed-product rule for Kronecker products applies:
\[
(A\otimes C)(B\otimes D)=(AB)\otimes(CD),
\]
for all conformable matrices over $R_L$. Hence
\[
(B_1\otimes I_{n_2})(I_{n_1}\otimes B_2^{*})
=
(B_1I_{n_1})\otimes(I_{n_2}B_2^{*})
=
B_1\otimes B_2^{*},
\]
and similarly
\[
(I_{m_1}\otimes B_2^{*})(B_1\otimes I_{m_2})
=
(I_{m_1}B_1)\otimes(B_2^{*}I_{m_2})
=
B_1\otimes B_2^{*}.
\]
Therefore
\[
B_XB_Z^{*}=(B_1\otimes B_2^{*})+(B_1\otimes B_2^{*})=0
\]
because $\mathrm{char}(R_L)=2$. This proves \eqref{eq:poly-orth}.

Now we lift to binary matrices. By Lemma~\ref{lem:ring-hom},
\[
\mathcal L_L(B_XB_Z^{*}) = \mathcal L_L(B_X)\,\mathcal L_L(B_Z^{*}).
\]
By conjugate-compatibility (Lemma~\ref{lem:compat-holds}),\par
$\mathcal L_L(B_Z^{*})=\mathcal L_L(B_Z)^T=H_Z^T$.
Thus
\[
\begin{aligned}
H_XH_Z^{T}
&= \mathcal L_L(B_X)\,\mathcal L_L(B_Z)^{T}
 = \mathcal L_L(B_X)\,\mathcal L_L(B_Z^{*})\\
&= \mathcal L_L(B_XB_Z^{*}) = \mathcal L_L(0)
 = 0.
\end{aligned}
\]
\end{proof}\vspace{-0.5cm}

\begin{remark}
Theorem \ref{thm:lp-orth} shows that the CSS commutation condition can be certified before binary expansion. Instead of forming the full binary matrices $H_X$ and $H_Z$ and then checking $H_XH_Z^T=0$, one may verify the corresponding polynomial-domain identity over $R_L$. 
\end{remark}


\begin{example}[NASA constraint length 7 symmetric LP example]
Consider the NASA constraint length 7, rate $R=1/2$ convolutional code \cite{hamkins1999joint} with polynomial base matrix
\[
B(D)=\bigl[h_{11}(D)\;\;h_{12}(D)\bigr]
\in R_7^{1\times 2},
~~
R_7=\mathbb{F}_2[D]/(D^7+1),
\]
where
\[
h_{11}(D)=1+D+D^3+D^4+D^6,
\]
\[
h_{12}(D)=1+D^3+D^4+D^5+D^6.
\]
We take the symmetric lifted-product construction, so that
$B_1=B_2=B$, $m=1$, $n=2$.

By Definition~\ref{def3}, the conjugates are
\[
h_{11}^*(D)=h_{11}(D),
\quad
h_{12}^*(D)=1+D+D^2+D^3+D^4.
\]
Hence the polynomial LP check matrices are
\[
\begin{aligned}
B_X
&=
\bigl[\,
B\otimes I_2
\;\big|\;
I_1\otimes B^*
\,\bigr] \\
&=
\left[\begin{array}{cccc|c}
h_{11}(D) & 0 & h_{12}(D) & 0 & h_{11}^*(D)\\
0 & h_{11}(D) & 0 & h_{12}(D) & h_{12}^*(D)
\end{array}\right].
\end{aligned}
\]
and
\[
\begin{aligned}
B_Z
&=
\bigl[\,
I_2\otimes B
\;\big|\;
B^*\otimes I_1
\,\bigr] \\
&=
\left[\begin{array}{cccc|c}
h_{11}(D) & h_{12}(D) & 0 & 0 & h_{11}^*(D)\\
0 & 0 & h_{11}(D) & h_{12}(D) & h_{12}^*(D)
\end{array}\right].
\end{aligned}
\]

Now apply the standard circulant lifting map $\mathcal L_7(\cdot)$ entrywise to obtain the binary parity-check matrices
\[
H_X=\mathcal L_7(B_X),\qquad H_Z=\mathcal L_7(B_Z).
\]
Since $m=1$, $n=2$, and $L=7$, both lifted matrices have
$N=(n^2+m^2)L=(2^2+1^2)\cdot 7=35$
columns and
$mnL=1\cdot 2\cdot 7=14$ rows. Therefore, $H_X,H_Z\in \mathbb{F}_2^{14\times 35}$. Moreover, the standard circulant lift is conjugate compatible, i.e.,
\[
\mathcal L_7\!\left(p^*(D)\right)=\mathcal L_7\!\left(p(D)\right)^T, 
\quad \text{ for all }\>\> p(D)\in R_7.
\]
Therefore, by Theorem~\ref{thm:lp-orth}, $H_XH_Z^T=0$,
so the lifted pair $(H_X,H_Z)$ defines a valid CSS code.

\end{example}
This example is intended only to illustrate the polynomial-domain
construction in Theorem \ref{thm:lp-orth} and the conjugate-compatible lifting
argument that guarantees $H_XH_Z^T=0$.



\section{Numerical Results}
\label{sec:numerical}
In this section, we provide finite-length examples obtained from the polynomial domain LP construction of Theorem \ref{thm:lp-orth} and evaluate the resulting QLDPC codes using the benchmark decoders described in Section~\ref{subsec:benchmark_decoders}. For QBP and RL-S, we set a maximum number of iterations $T$ as noted in each figure. For QGBPD, we set $T$ =100 and $G=N$, for a total maximum of $GT$ iterations. We use
\[
B_1(D)=B_2(D)=H(D),
\]
where
\[
H(D)=
\begin{bmatrix}
D^{50} & D^{105} & D^{74} & D^4\\
D^8+D^{30} & D^7+D^{105} & D^{80} & D^{73}\\
D^{28} & D^{37} & D^{73} & D^{13}
\end{bmatrix}.
\]
The base matrix $H(D)$ was selected through a computational search over small $3\times 4$ polynomial matrices. We focused on full-support polynomial base matrices that remain sparse after circulant lifting while providing enough algebraic structure to produce nonzero-rate LP codes with balanced $X$- and $Z$-distance behavior. The selected base matrix gave the strongest finite-length results among the candidates considered. For each lifting size $L$, the entries of $H(D)$ are reduced modulo $D^L+1$, and the lifted-product CSS parity-check matrices are formed according to Theorem \ref{thm:lp-orth} with $B_1=B_2=H$. 

The parameters of the codes obtained from the available computations are summarized in Table~\ref{tab:lp_matrix1}. The table shows that changing the lifting size \(L\) produces a family of structured finite-length QLDPC codes with different block lengths, dimensions, and distance estimates. In the following simulations, we focus on selected members of this family and compare QBP, QBPGD, and RL-S over the depolarizing channel. For these decoders, $T$ is the maximum number of iterations allowed and in the case of QBPGD the total is multiplied by the rounds of decimation $G$.\footnote{In our experiments, we use a large $T$ for QBP and set $G=N$ and $T=100$ for QBPGD in order to ensure good performance at the cost of high complexity. This is used so that we can see the limiting behavior of these codes with the respective decoders. Many good strategies exist to lower that complexity, often maintaining similar performance.} The reported metric is the block error rate, {\it i.e.}, the probability that the decoder either fails to return a consistent syndrome or returns a recovery in the wrong logical coset.

%



\begin{table}[htbp]
\caption{Parameters of the lifted-product codes obtained from the \(3\times 4\) polynomial base matrix with \(B_1(D)=B_2(D)=H(D)\).}
\label{tab:lp_matrix1}
\centering
\begin{tabular}{c c c c c c}
\hline
\(L\) & \(N\) & \(K\) & \(d_X\) & \(d_Z\) & \(d\) \\
\hline
10 & 250  & 18 &  8 &  8 &  8 \\
11 & 275  & 15 & 12 & 12 & 12 \\
12 & 300  & 20 & 12 & 12 & 12 \\
13 & 325  & 17 & 14 & 14 & 14 \\
14 & 350  & 22 & 12 & 12 & 12 \\
15 & 375  & 19 & 16 & 16 & 16 \\
16 & 400  & 24 & 13 & 13 & 13 \\
17 & 425  & 21 & 18 & 18 & 18 \\
24 & 600  & 32 & \(\leq 20\) & \(\leq 20\) & \(\leq 20\) \\
41 & 1025 & 45 & \(\leq 26\) & \(\leq 26\) & \(\leq 26\) \\
83 & 2075 & 87 & \(\leq 34\) & \(\leq 34\) & \(\leq 34\) \\
\hline
\end{tabular}\vspace{0mm}
\end{table}

\begin{figure}[t]
  \centering
  \includegraphics[width=\linewidth]{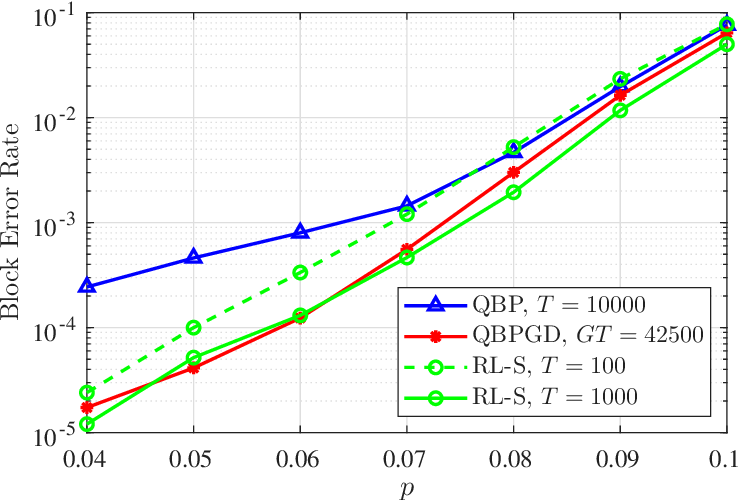}\vspace{-5mm}
  \caption{Block error rate comparison for the constructed
  \( [[425,21,18]] \) lifted-product QLDPC code obtained from the
  \(3\times 4\) polynomial base matrix with lifting size \(L=17\).}
  \label{fig:RL_LPCC_n425_FER}\vspace{1mm}
\end{figure}

Fig.~\ref{fig:RL_LPCC_n425_FER} shows the block error rate performance of our constructed \( [[425,21,18]] \) code corresponding to \(L=17\). We compare QBP, QBPGD, and RL-S with different iteration budgets. The results show that the constructed code can be decoded effectively by BP-based methods over the depolarizing channel. Fig.~\ref{fig:RL_LPCC_n600_FER} considers the constructed length \(N=600\) code obtained with lifting size \(L=24\). For this code, \(K=32\) and a minimum distance search implemented using the Gurobi Optimizer~\cite{gurobi} gives \(d_X,d_Z,d\leq 20\), as shown in Table~\ref{tab:lp_matrix1}. The same set of benchmark decoders is used over the depolarizing channel. The results show that the RL-S scheme improves the block error rate relative to QBP. In particular, RL-S with a larger iteration budget gives the best performance among the plotted decoders.

\begin{figure}[t]
  \centering
  \includegraphics[width=\linewidth]{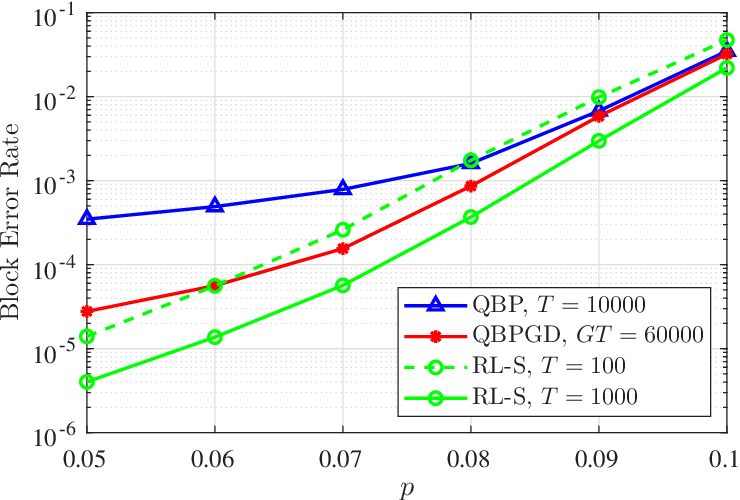}\vspace{-5mm}
  \caption{Block error rate comparison for the constructed
  \( [[600,32,d\leq 20]] \) lifted-product QLDPC code obtained from the
  \(3\times 4\) polynomial base matrix with lifting size \(L=24\).}
  \label{fig:RL_LPCC_n600_FER}\vspace{1mm}
\end{figure}

\begin{figure}[t]
  \centering
  \includegraphics[width=\linewidth]{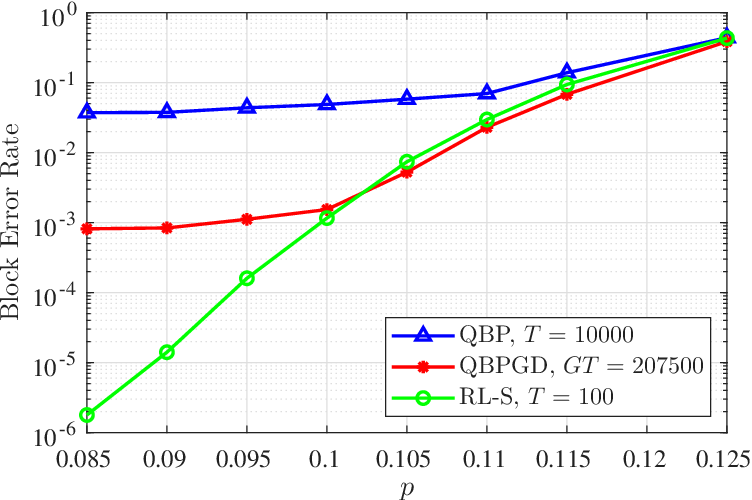}
 \vspace{-5mm}\caption{Block error rate comparison for the constructed
  \( [[2075,87,d\leq 34]] \) lifted-product QLDPC code obtained from the
  \(3\times 4\) polynomial base matrix with lifting size \(L=83\).}
  \label{fig:RLLPCC_n2075_FER}\vspace{1mm}
\end{figure}

Finally, Fig.~\ref{fig:RLLPCC_n2075_FER} evaluates a larger constructed code with block length \(N=2075\), corresponding to lifting size \(L=83\). For this code, \(K=87\) and a minimum distance search gives \(d_X,d_Z,d\leq 34\). For this construction, we observe that the QBP and QBPGD show a severe error floor, but that the RL-S decoder does not. This is due, in part, to the large number of 4-cycles in the graph ($30,378$ 4-cycles). The observed improvement is consistent with our previous results for RL decoding of polar codes \cite{moradi2025enhancing}, for which the graphs have large numbers of 4-cycles and the agent can effectively optimize the message passing schedule to mitigate that weakness. 


\section{Conclusions}

We presented a polynomial-domain formulation of finite-length lifted-product QLDPC codes over the quotient ring $R_L=\mathbb{F}_2[D]/(D^L+1).$ In this formulation, small polynomial base matrices are lifted entrywise to binary circulant blocks, giving a compact representation of the full binary parity-check matrices. The main algebraic contribution is the identification and use of conjugate-compatible circulant lifting. This property allows the CSS commutation condition \(H_XH_Z^T=0\) to be proved directly in the polynomial domain before binary expansion. This viewpoint connects classical convolutional code polynomial descriptions with finite-length LP-QLDPC code construction. We illustrated the construction using a constraint length 7 NASA convolutional code and a $3\times4$ polynomial base matrix, producing a family of structured finite-length QLDPC codes. Selected examples were evaluated over the depolarizing channel using BP-based decoders, showing that the constructed codes are compatible with practical iterative decoding methods.

This work should be viewed as a description and verification framework that can be used for new LP-QLDPC code design rather than a new asymptotic LP result. Future work includes combining this polynomial-domain formulation with systematic distance-search constraints, optimizing the choice of polynomial base matrices, and designing syndrome-extraction schedules adapted to the resulting structure.



\section*{Acknowledgement}
This material is based upon work supported by the National Science Foundation under Grant No. CCF-2145917. \pagebreak
\balance
\bibliographystyle{IEEEtran}
\bibliography{bibliography}

\appendices




\end{document}